\documentclass[pra,twocolumn,preprintnumbers,nofootinbib,amsmath,amssymb,superscriptaddress,showpacs,aps,10pt]{revtex4-1}

\usepackage{graphicx, color, graphpap} % Include figure files
\usepackage{amsmath}
\usepackage{enumitem}
\usepackage{amssymb}
\usepackage{amsthm}
\usepackage{pstricks}
\usepackage{multirow}
\usepackage{hyperref}
\usepackage[T1]{fontenc}

\long\def\ca#1\cb{} %Use for commenting out: \ca...\cb

% Defined commands

\newcommand{\ket}[1]{|#1\rangle}               %ket
\newcommand{\proj}[1]{|#1\rangle\langle #1|}   %ket
      %ket

               %colon in math with less space
\newcommand{\bra}[1]{\langle #1|}              %bra
\newcommand{\dya}[1]{\ket{#1}\!\bra{#1}}
\newcommand{\dyad}[2]{\ket{#1}\!\bra{#2}}      %dyad
\newcommand{\ip}[2]{\langle #1|#2\rangle}      %the inner product
 %matrix element
\newcommand{\guess}{\text{guess}}

\newcommand{\rhot}{\tilde{\rho}}

\newcommand{\FC}{\mathcal{F}}

\newcommand{\HC}{\mathcal{H}}
\newcommand{\IC}{\mathcal{I}}

\newcommand{\Tr}{{\rm Tr}}

   %average
\newcommand{\pg}{\text{pg}}

\renewcommand{\geq}{\geqslant}
\renewcommand{\leq}{\leqslant}

\newcommand{\ot}{\otimes}
\newcommand{\ad}{^\dagger}
 %Frobenius inner product

\newcommand*{\id}{\openone}

\newcommand*{\hmin}{H_{\min}}

%Greek Letters

\newcommand{\ep}{\epsilon}

\renewcommand{\th}{\theta } %Latex \th = thor n

\newcommand{\Th}{\Theta }

\newcommand{\Lm}{\Lambda }

\newcommand{\sg}{\sigma }

\newcommand{\om}{\omega }

%Theorems, Lemmas, etc.
\newtheoremstyle{example}{\topsep}{\topsep}%
{} %         Body font
{} %         Indent amount (empty = no indent, \parindent = para indent)
{\bfseries} %Thm head font
{.} %        Punctuation after thm head
{   } %      Space after thm head (\newline = linebreak)
{\thmname{#1}\thmnumber{ #2}} %\thmnote{ #3}}% Thm head spec
\theoremstyle{example}
 %[subsection]

%\theoremstyle{plain}
%\newtheorem*{thm3}{Lemma S1}

\newtheorem{theorem}{Theorem}
\newtheorem{lemma}[theorem]{Lemma}
\newtheorem{corollary}[theorem]{Corollary}

\theoremstyle{definition}

\begin{document}

\title{An equality between entanglement and uncertainty}

\author{Mario Berta}
%\email[]{berta@caltech.edu}
\affiliation{Institute for Quantum Information and Matter, Caltech, Pasadena, CA 91125, USA}
%\affiliation{Institute for Theoretical Physics, ETH Zurich, 8093 Z\"urich, Switzerland}
\author{Patrick J. Coles}
%\email[]{pat@nus.edu.sg}
\affiliation{Institute for Quantum Computing and Department of Physics and Astronomy, University of Waterloo, N2L3G1 Waterloo, Ontario, Canada}
\affiliation{Centre for Quantum Technologies, National University of Singapore, 2 Science Drive 3, 117543 Singapore.}
\author{Stephanie Wehner}
%\email[]{wehner@nus.edu.sg}
\affiliation{Centre for Quantum Technologies, National University of Singapore, 2 Science Drive 3, 117543 Singapore.}
\affiliation{QuTech, Delft University of Technology, Lorentzweg 1, 2628 CJ Delft, Netherlands}
\date{\today}

\begin{abstract}
Heisenberg's uncertainty principle implies that if one party (Alice) prepares a system and randomly measures one of two incompatible observables, then another party (Bob) cannot perfectly predict the measurement outcomes. This implication assumes that Bob does not possess an additional system that is entangled to the measured one; indeed the seminal paper of Einstein, Podolsky and Rosen (EPR) showed that maximal entanglement allows Bob to perfectly win this guessing game. Although not in contradiction, the observations made by EPR and Heisenberg illustrate two extreme cases of the interplay between entanglement and uncertainty. On the one hand, no entanglement means that Bob's predictions must display some uncertainty. Yet on the other hand, maximal entanglement means that there is no more uncertainty at all. Here we follow an operational approach and give an exact relation - an equality - between the amount of uncertainty as measured by the guessing probability, and the amount of entanglement as measured by the recoverable entanglement fidelity. From this equality we deduce a simple criterion for witnessing bipartite entanglement and a novel entanglement monogamy equality.
 %This feat stands in contrast to Heisenberg's uncertainty principle which tell us that if the observer is not entangled with the system at all, then his ability to predict the outcomes of incompatible measurements such as position and momentum is limited. 
%to be measured can perfectly predict the outcome of two incompatible measurements.
\end{abstract}

\pacs{03.67.-a, 03.67.Hk}

\maketitle

%%%%%%%%%%%%%%%%%%%%%%%%%%%%%%%%%%%%%%%%%%%%%%%%%%%%%%%%%%%%%%%%

\section{Uncertainty relations}

Heisenberg's uncertainty principle forms one of the fundamental elements of quantum mechanics. Originally proven for measurements of position and momentum, it is one of the most striking examples of the difference between a quantum and a classical world~\cite{Heisenberg}. Uncertainty relations today are probably best known in the form given by Robertson~\cite{Robertson}, who extended Heisenberg's result to two arbitrary observables $X$ and $Z$. More precisely, Robertson's relation states that when measuring the state $\ket{\psi}$ using either $X$ or $Z$, then
\begin{align}\label{eq:heisenberg}
\Delta X \Delta Z \geq \frac{1}{2} |\bra{\psi}[X,Z]\ket{\psi}|\ ,
\end{align}
where $\Delta Y = \sqrt{\bra{\psi}Y^2\ket{\psi} - \bra{\psi}Y\ket{\psi}^2}$ for $Y \in \{X,Z\}$ is the standard deviation resulting from measuring $\ket{\psi}$ with observable $Y$.
 
In the modern day literature, uncertainty is usually measured in terms of entropies (starting with~\cite{Hirschman57,Beckner75,BiaMyc75}, see~\cite{EURreview1} for a survey). One of the reasons this is desirable is that~\eqref{eq:heisenberg} makes no statement if $\ket{\psi}$ happens to give zero expectation on $[X,Z]$~\cite{deutsch}. To see how uncertainty can be quantified in terms of 
entropies, let us start with a simple example. Throughout, we let Alice ($A$) denote the system to be measured. For now, let us consider measuring a single qubit in the state $\rho_A$ using two incompatible measurements given by the Pauli $\sigma_x$ or $\sigma_z$ eigenbases, and let $K$ be the random variable associated with the measurement outcome. We have from~\cite{MaaUff88} that for any state $\rho_A$
\begin{align}\label{eq:basicUR}
H(K|\Theta) = \frac{1}{2}\Big[H(K|\Theta=\sigma_x)+H(K|\Theta=\sigma_z)\Big]\geq\frac{1}{2}\ ,
\end{align}
where $H(K|\Theta=\theta) = - \sum_{k} p_{k|\Theta=\theta} \log p_{k|\Theta=\theta}$ is the Shannon entropy (all logarithms are base 2 in this article) of the probability distribution over measurement outcomes $k \in \{0,1\}$ when we perform the measurement labeled $\theta$ on the state $\rho_A$, and each measurement is chosen with probability $p_\theta = 1/2$. To see that this is an uncertainty relation note that if one of the two entropies is zero, then~\eqref{eq:basicUR} tells us that the other is necessarily non-zero, i.e., there is at least some amount of uncertainty. If we measure a $d_A$-dimensional system $A$ in two orthonormal bases ${\theta}_0 = \{\ket{x_0}\}_{x=1}^{d_A}$ and ${\theta_1} = \{\ket{x_1}\}_{x=1}^{d_A}$ then the r.h.s. of~\eqref{eq:basicUR} becomes $\log(1/c)$, where $c = \max_{x_0,x_1} |\ip{x_0}{x_1}|^2$. The largest amount of uncertainty, i.e., the largest $\log(1/c)$, is thereby obtained when $|\ip{x_0}{x_1}| = 1/\sqrt{d_A}$, that is, the two bases are mutually unbiased (MUB)~\cite{Ivanovic1981}.

When thinking about uncertainty, it is often illustrative to adopt an adversarial perspective and consider an ``uncertainty game''~\cite{BertaEtAl}, commonly used in quantum cryptography~\cite{bounded}. In particular, we will think about uncertainty from the perspective of an observer called Bob holding a second system ($B$) whose task is to guess the outcome of the measurement on Alice's system successfully. Bob thereby knows ahead of time what measurements could be made and the probability that a particular measurement setting is chosen. To help him win the game, Bob may even prepare $\rho_A$ himself, and Alice tells him which measurement she performed before he has to make his guess. The amount of uncertainty as measured by entropies can be understood as a limit on how well Bob can guess Alice's measurement outcome - the more difficult it is for Bob to guess the more uncertain Alice's measurement outcomes are. If Bob is not entangled with $A$ but only keeps classical information about the state, such as for example a description of the density operator $\rho_A$, then~\eqref{eq:basicUR} still holds even if we condition on Bob's classical information $B$~\cite{Hall1}. More precisely, we have $H(K|\Theta B_{\rm classical}) \geq 1/2$ for any states or distribution of states that Bob may prepare.

%%%%%%%%%%%%%%%%%%%%%%%%%%%%%%%%%%%%%%%%%%%%%%%%%%%%%%%%%%%%%%%%

\section{Uncertainty and entanglement}

%are inadequate to capture the interplay between entanglement and uncertainty.

Another central element of quantum mechanics is the possibility of entanglement, and examples suggest that there is a strong interplay between entanglement and uncertainty. In particular, Einstein, Poldolsky and Rosen~\cite{Einstein} observed, that 
if Bob is maximally entangled with $A$ then his uncertainty can be reduced dramatically. To see this imagine that $\rho_{AB} =\proj{\Phi}$ where $\ket{\Phi}=(\ket{00}+\ket{11})/\sqrt{2}$ is the maximally entangled state between $A$ and $B$. Since $\ket{\Phi}$ is maximally correlated in both the $\sg_x$ and $\sg_z$ eigenbases, Bob can simply measure his half of the EPR pair in the same basis as Alice to predict her measurement outcome perfectly, winning the guessing game described above. This is precisely the effect observed in~\cite{Einstein} and highlights that the uncertainty relations of~\eqref{eq:heisenberg} and~\eqref{eq:basicUR} do not capture the interplay between entanglement and uncertainty in the general two-party guessing game. Fortunately, it is possible to extend the notion of uncertainty relations to take the possibility of entanglement into account~\cite{RenesBoileau}. Such relations are known as uncertainty relations with quantum side information (here $B$). More precisely, it was shown~\cite{BertaEtAl} that if we measure $A$ in two bases labeled $\theta_0,\theta_1$ then
\begin{align}\label{eqn2}
H(K|B\Theta)&=\frac{1}{2}\Big[H(K|B\Theta=\theta_0)+H(K|B\Theta=\theta_1)\Big]\notag\\
&\geq\log(1/c)+H(A|B)\ ,
\end{align}
where $H(A|B)$ is the conditional von Neumann entropy of $A$ given $B$. If $A$ and $B$ are entangled, then $H(A|B)$ can be negative. Indeed, $H(A|B)=-\log d_A$ when $\rho_{AB}$ is the maximally entangled state, in which case the lower bound in~\eqref{eqn2} becomes trivial. The uncertainty relation of~\eqref{eqn2} thus allows for the possibility that Bob's uncertainty could be reduced in the presence of entanglement. It also provides us with a first clue to the relation between entanglement and uncertainty in one direction, namely that little uncertainty (i.e., $H(K|B\Theta)$ is small) implies that $H(A|B)$ must be negative and hence $\rho_{AB}$ is entangled~\cite{DevWin05}. As such,~\eqref{eqn2} is useful for the task of witnessing entanglement~\cite{PHCFR,LXXLG}.

Many more similar relations have since been proven for more than two measurements on Alice's system, and in terms of other forms of entropies. One entropy measure that is of central importance in cryptography is the conditional min-entropy $\hmin$, and it yields a more immediate link between uncertainty relations with quantum side information and the uncertainty game mentioned above. Specifically, it was shown~\cite{6670761} that if we measure $A$ in one of $d_A+1$ possible mutually unbiased bases chosen uniformly at random then 
\begin{align}\label{eq:hminPrevious}
\hmin(K|B\Theta)\gtrsim\log d_A+\min\{0,\hmin(A|B)\}\ .
\end{align}
(More precisely, smoothing of the entropies is required for~\eqref{eq:hminPrevious} to hold, and hence the symbol $\gtrsim$ refers to an additional term that depends on the smoothing.) With $K$ being a classical random variable, the conditional min-entropy $\hmin(K|B\Theta)=-\log P_{\rm guess}(K|B\Theta)$ is simply derived from the maximum probability that Bob can guess $K$, averaged over the choice of basis $\theta$~\cite{KonRenSch09}. That is, it captures exactly how well Bob can guess Alice's measurement outcome $K$ by performing a measurement on $B$. The fully quantum conditional min-entropy $\hmin(A|B)$ has the operational interpretation $\hmin(A|B)=-\log [d_A\cdot F(A|B)]$, with
\begin{align}
F(A|B)=\max_{\Lambda_{B\rightarrow A'}}F\left(\Phi_{AA'},\mathcal{I}_A\otimes\Lambda_{B\rightarrow A'}(\rho_{AB})\right)\ ,
\end{align}
where $F(\rho, \sg) = \left(\Tr \sqrt{\sqrt{\rho}\sg\sqrt{\rho}}\right)^2$ is Uhlmann's fidelity~\cite{Uhlmann76}, $\Phi_{AA'} = \dya{\Phi_{AA'}}$, and $\ket{\Phi_{AA'}}=(1/\sqrt{d_A})\cdot\sum_{j=1}^{d_A}\ket{j}_A\ket{j}_{A'}$ is the maximally entangled state between $A$ and $A'$~\cite{KonRenSch09}. In other words, the conditional min-entropy $\hmin(A|B)$ measures how close one can bring a bipartite quantum state $\rho_{AB}$ to the maximally entangled state by performing an arbitrary operation $\Lm$ on the $B$ system. Recall from the example above that Bob can win the uncertainty game perfectly if $\rho_{AB}$ really is maximally entangled. Intuitively, the conditional min-entropy thus measures how far away Bob is from this scenario. Needless to say one could write down similar statements for R{\'e}nyi entropies other than the min-entropy, but these are in fact equivalent up to small error terms.

Do these relations resolve the question of how uncertainty relates to entanglement? Note that the uncertainty relation~\eqref{eq:hminPrevious} again provides us with a relation between entanglement and uncertainty in one direction. In particular, it tells us that if it is easy for Bob to guess Alice's measurement outcome ($\hmin(K|B\Theta)$ is small), then there really exists some map $\Lambda_{B\rightarrow A'}$ that Bob can use to bring $\rho_{AB}$ at least somewhat close to being maximally entangled with $A$. That is, it tells us that a reduction in uncertainty implies the presence of entanglement. However, it does not tell us that the presence of entanglement really does lead to a significant reduction in uncertainty. Of course, if $\rho_{AB}$ is close to the maximally entangled state then uncertainty is reduced by at least some amount, because two states which are close yield similar statistics when measured. Yet we will see below that this alone is insufficient for our purpose.

%%%%%%%%%%%%%%%%%%%%%%%%%%%%%%%%%%%%%%%%%%%%%%%%%%%%%%%%%%%%%%%%

\section{Main result}\label{sec:main}

Here, we prove the following finite-dimensional equality if we measure $A$ in one of $d_A +1$ possible mutually unbiased bases with uniformly random probability
\begin{align}\label{eq:mainResult}
H_2(K|B\Theta)=\log(d_A+1)-\log\left(2^{-H_2(A|B)}+1\right)\ , 
\end{align}
where
\begin{align}
&H_2(A|B)=\notag\\
&-\log\Tr\left[\rho_{AB}(\id_A\otimes \rho_B)^{-1/2}\rho_{AB}(\id_A\otimes \rho_B)^{-1/2}\right]
\end{align}
is the conditional R{\'e}nyi $2$-entropy used in quantum cryptography (see e.g.~\cite{RennerThesis05URL,Muller-Lennert:2013aa}), and $K$ is the classical measurement outcome obtained by measuring $A$ in the basis labeled $\Theta=\theta$. Since $K$ is a classical random variable, the R{\'e}nyi $2$-entropy $H_2(K|B\Theta)$ has an operational interpretation as given by the probability that Bob manages to guess Alice's measurement outcome $K$ using the pretty good measurement~\cite{Hausladen94,PhysRevA.78.022316} after he learns which measurement $\Theta$ was made: $H_2(K|B\Theta)=-\log P_{\rm guess}^{\pg}(K|B\Theta)$. For the fully quantum R{\'e}nyi $2$-entropy $H_2(A|B)$ we prove (see Appendix~\ref{sec:methods}) that
\begin{eqnarray}\label{eq:h2qoperational}
H_2(A|B)&=-\log[d_A\cdot F^{\pg}(A|B)], \quad\text{with }\\ 
F^{\pg}(A|B)&=F(\Phi_{AA'},\mathcal{I}_A\otimes\Lambda_{B\rightarrow A'}^{\pg}(\rho_{AB}))\ ,  
\end{eqnarray}
where $\Lambda^{\pg}$ is the pretty good recovery map~\cite{BarnKnil02}. (We denote by $\Phi_{AA'}$ the normalized maximally entangled state, and hence a factor $d_{A}$ appears in~\eqref{eq:h2qoperational}.) Both the pretty good measurement and the pretty good recovery map get very close to the performance of the optimal processes~\cite{Hausladen94,BarnKnil02}. For the special case $\rho_{AB} = \rho_A \otimes \rho_B$, \eqref{eq:mainResult} becomes an equation relating unconditional entropies that was discussed in~\cite{Ivanovic1992,BrukZeilPRL1999}, and used to derive uncertainty relations for Shannon entropies (see~\cite{EURreview1} for an overview). Furthermore, the conditional R{\'e}nyi $2$-entropy also appears in the study of randomness extractors against quantum side information (see e.g.~\cite{RennerThesis05URL,TRSS10}), and in its quantum counterpart decoupling (see e.g.~\cite{Dupuis10,Dup09}).

We mention that the existence of a full set of MUBs is only known in prime power dimension~\cite{Bandy01,Wootters89}, but we show in Appendix~\ref{app:revisited} that our main result~\eqref{eq:mainResult} also holds for informationally complete positive operator valued measures (SIC-POVMs) and unitary 2-designs. For the latter efficient constructions are known in any dimension~\cite{Gross07,Dankert09}, and in particular the set of all bases defines a unitary 2-design.

Our relation~\eqref{eq:mainResult} establishes an equivalence between uncertainty as measured by $H_2(K|B\Theta)$ and our ability to recover entanglement as given by $H_2(A|B)$. It is an operational way to merge the observations of EPR and Heisenberg into a single equation, demonstrating that both effects can be seen as flip sides of the same coin.

%%%%%%%%%%%%%%%%%%%%%%%%%%%%%%%%%%%%%%%%%%%%%%%%%%%%%%%%%%%%%%%%

\section{Discussion}\label{sec:disc}

\subsection{Operational examples} 

To gain further intuition about~\eqref{eq:mainResult}, let us first return to the uncertainty game discussed earlier. Note that in terms of the operational interpretations of the conditional R\'{e}nyi $2$-entropy, we can rewrite~\eqref{eq:mainResult} as 
\begin{align}\label{eq:mainOperational}
P_{\rm guess}^{\pg}(K|B\Theta)&= \frac{1}{d_A+1}\sum_{\theta} P_{\rm guess}^{\pg}(K|B\Theta=\theta)\notag\\
&= \frac{d_A\cdot F^{\pg}(A|B) + 1}{d_A+1}\ .
\end{align}
In the game, Bob prepares a state $\rho_{AB}$ and sends the $A$ system to Alice. She measures $A$ in one basis chosen uniformly at random from the complete set of $d_A+1$ MUBs, and announces the basis (the index $\theta$) to Bob. Bob's task is to guess Alice's outcome using the pretty good measurement on $B$. Equation~\eqref{eq:mainOperational} says that Bob's ability to win or lose this game is quantitatively connected to the recoverable entanglement fidelity of $\rho_{AB}$, as measured by $F^{\pg}(A|B)$.  

Let us consider a number of special cases that illustrate this concept. In what follows we refer to $H_2(A|B)\geq 0$ as the Heisenberg-limited regime and $H_2(A|B)< 0$ as the enhanced regime. As we will see below, this terminology refers to two distinct regimes, one in which Bob's guessing probability is restricted by a Heisenberg-like uncertainty relation (see Eq.~\eqref{eq:HeisenbergLimited1}), and the other in which his guessing probability can be enhanced beyond this restriction (although of course not in violation of the uncertainty principle). For example, if $\rho_{AB}$ is the maximally entangled state, we have $F^{\pg}(A|B)=1$ and Bob can guess Alice's measurement outcome perfectly regardless of which measurement she performs, i.e., $P_{\rm guess}^{\pg}(K|B\Theta=\theta)=1$ for all $\theta$. That is, there is no uncertainty as expected. If Bob prepares $\rho_{AB}$ with less than maximal entanglement, then $F^{\pg}(A|B)<1$ and there will be at least one basis for which Bob cannot perfectly guess the outcome. Thus, there is at least some amount of uncertainty expressed quantitatively as $H_2(K|B\Theta)$. If $\rho_{AB}$ is separable, then Bob is stuck in the Heisenberg-limited regime ($F^{\pg} \leq 1/d_A$) and his ability to guess is very poor, constrained by the uncertainty relation
\begin{align}\label{eq:HeisenbergLimited1}
P_{\rm guess}^{\pg}(K|B\Theta) \leq 2/(d_A+1).
\end{align}
This illustrates that entanglement is necessary for Bob to gain an advantage in the guessing game.

%%%%%%%%%%%%%%%%%%%%%%%%%%%%%%%%%%%%%%%%%%%%%%%%%%%%%%%%%%%%%%%%

\subsection{Uncertainty and certainty relations}

One might ask why we formulate our uncertainty equality~\eqref{eq:mainResult} using a full set of $d_A+1$ MUBs, can we not use fewer measurements? To answer this, it is instructive to study what kind of relations our main result~\eqref{eq:mainResult} implies. On the one hand, we can deduce regular uncertainty relations, and, e.g., we get a relation in terms of the smooth conditional min-entropy similar to~\cite{6670761},
\begin{align}\label{eqn662}
H_{\min}^{\varepsilon}(K| \Th B)\geq\log(d_{A}+1)&-\log\left(2^{-H_{\min}(A|B)}+1\right)\notag\\
&-1-2\log\frac{1}{\ep}\ ,
\end{align}
where $\ep>0$ denotes a small error term (see Appendix~\ref{app:crypto} for details). Here, the l.h.s.~has the operation meaning of minus the logarithm of Bob's guessing probability (up to error $\ep$) when Alice measures in one of $d_A+1$ possible MUBs chosen uniformly at random. But on the other hand, we also get relations that upper bound the uncertainties of incompatible observables. In the literature these are known as certainty relations~\cite{SanchezRuiz1995,Wehner09}, and here we give the first such relations that allow for quantum side information. E.g.~we get in terms of the conditional min-entropy (again up to a small error term $\ep>0$),
\begin{align}\label{eqn663}
H_{\min}(K| \Th B)\leq\log(d_A+1)&-\log\left(2^{-H_{\min}^{\ep}(A|B)}+\frac{2}{\ep^{2}}\right)\notag\\
&+1+2\log\frac{1}{\ep}\ .
\end{align}
This says that Bob's certainty, i.e., his ability to guess Alice's measurement outcome, must be high if he is highly entangled to Alice as measured by the smooth conditional min-entropy. Like our main result, \eqref{eqn663} implies that if Alice and Bob are maximally entangled, Bob has perfect certainty about Alice's outcomes regardless of which measurement she performs.

Now there is a simple argument that considering less than a complete set of MUBs, a so-called extendable set (where there exist an MUB that could be added to the set), implies that only trivial certainty relations can hold. As uncertainty equalities as in~\eqref{eq:mainResult} imply non-trivial certainty relations, such equalities cannot hold for extendable sets. This is in sharp contrast to uncertainty relations, where non-trivial relations can be obtained for just two measurements. To see this, consider the case where $\rho_A$ is just one qubit and we perform measurements in the $\sg_X$ and $\sg_Z$ eigenbases, respectively. Hence $B$ is trivial, $\hmin(K|\Theta B) = \hmin(K|\Theta)$, and we just consider the entropy of the outcome distribution of measuring $\rho_A$ in one of the two bases. In terms of the uncertainty game discussed before this means that Bob can only choose the state $\rho_{A}$ in order to guess the outcome of the measurement on Alice's system (but is not allowed to keep any quantum information $B$ about $A$). Clearly, when $\rho_A$ is an eigenstate of $\sg_Y$, then the outcome distribution for both $\sg_X$ and $\sg_Z$ is uniform and hence $\hmin(K|\Theta) = 1$, which is the maximum value. This argument generalizes to any extendable set of MUBs, since there exists a state (from another MUB) that has $\hmin(K|\Theta) = \log d_A$ which is the maximum value that it can take and hence only the trivial upper bound/certainty relation holds. It is thus clear that equalities such as~\eqref{eq:mainResult} can only hold for sets of measurements that are sufficiently rich.

%%%%%%%%%%%%%%%%%%%%%%%%%%%%%%%%%%%%%%%%%%%%%%%%%%%%%%%%%%%%%%%%

\subsection{Bounds for fewer bases}

Even though there does not exist an uncertainty equality for measuring in less than $d_{A}+1$ MUBs, we can still give lower (and trivial upper) bounds for Bob's uncertainty about $1\leq n \leq d_A $ MUBs on $A$ in terms of the recoverable entanglement fidelity between $A$ and $B$. Moreover these inequalities are tight for all $n$, that is, fixing the set of measurements, there exists states that achieve the upper and lower bounds. Our relations are again in terms of the conditional R\'enyi 2-entropy. Using $P^{\pg}_{\text{guess}}(n)$ as a shorthand to denote Bob's guessing probability $P^{\pg}_{\text{guess}}(K | B\Th)$ when Alice does measurements in a subset of size $n$ of a complete set of MUBs, we find that the following bounds are tight in the Heisenberg-limited regime ($F^{\pg}(A|B)\leq 1/d_A$),
\begin{align}\label{eqn8}
\frac{1}{d_A} \leq P^{\pg}_{\text{guess}}(n) \leq \frac{d_A}{n}\cdot F^{\pg}(A|B)+\frac{n-1}{n\cdot d_A}\ .
\end{align}
Moreover, the following bounds are tight in the enhanced regime ($F^{\pg}(A|B)> 1/d_A$),
\begin{align}\label{eqn89}
F^{\pg}(A|B) \leq P^{\pg}_{\text{guess}}(n) \leq \frac{n-1}{n}\cdot F^{\pg}(A|B)+\frac{1}{n}\ .
\end{align}
Here, the upper bounds (lower bounds) can be thought of as uncertainty relations (certainty relations). Note that we can derive these tight relations for all $n$ directly from our main result. Taken together, \eqref{eqn8} and~\eqref{eqn89} completely characterize the allowable range of values that $P^{\pg}_{\text{guess}}(n)$ can attain. As an example, consider $d_A=5$. Fig.~\ref{fgr1} plots the tight upper and lower bounds as a function of $F^{\pg}(A|B)$. Notice that the (trivial) lower bound does not change as $n$ varies from 1 to 5; it only increases when we include the sixth basis. This is consistent with our discussion in the previous subsection, where we noted that a non-trivial certainty relation, i.e., a relation stronger than the certainty relation one obtains for a single basis, requires sufficiently rich sets of measurements. In contrast, the tight upper bound steadily decreases with $n$, reflecting the complementarity between the different bases. Overall, as $n$ increases from 1 to 6, the area of the allowed range monotonically shrinks towards zero, and the allowed area becomes zero for $n=6$ since the two quantities $P^{\pg}_{\text{guess}}(n)$ and $F^{\pg}(A|B)$ are deterministically related by our main result.

\begin{figure}
\includegraphics[width=\linewidth]{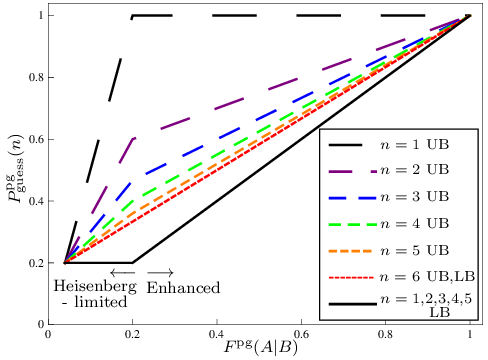}
\caption{Upper (UB) and lower (LB) bounds on $P^{\pg}_{\text{guess}}(n)$ as a function of $F^{\pg}(A|B)$ for $d_A=5$, for various values of $n$. For $n=1,2,3,4,5$, the (trivial) lower bound is given by the solid black line. The upper bounds for these values of $n$ are respectively the black, purple, blue, green, and orange dashed lines. For $n=6$, the upper and lower bounds coincide, i.e., the allowed values are confined to live on the red dashed line. The regions where $H_2(A|B)\geq 0$ and $H_2(A|B)< 0$ are labeled ``Heisenberg-limited'' and ``enhanced'', respectively.
\label{fgr1}}
\end{figure}

From Fig.~\ref{fgr1} one can also see that if Bob can guess two MUBs on $A$ well then he can also guess $d_A+1$ MUBs on $A$ fairly well. Conceptually this follows from a two step chain of reasoning: if Bob's uncertainty is low for two MUBs, then he must be entangled to Alice, which in turn implies that he must have a low uncertainty for all bases. So entanglement provides the key link, from two MUBs to all bases. From the above results, it is straightforward to derive the following quantitative statement of this idea
\begin{align}\label{eqn10}
P^{\pg}_{\text{guess}}(d_A+1) \geq \frac{d_A\cdot\left(2P^{\pg}_{\text{guess}}(2)-1\right) +1}{d_A+1}\ ,
\end{align}
which says that as $P^{\pg}_{\text{guess}}(2)\to 1$, then $P^{\pg}_{\text{guess}}(d_A+1)\to 1$.

%%%%%%%%%%%%%%%%%%%%%%%%%%%%%%%%%%%%%%%%%%%%%%%%%%%%%%%%%%%%%%%%

\subsection{Applications in quantum information theory}

We briefly discuss some applications of our main result to quantum information processing tasks. Because entanglement is crucial for several quantum information technologies, the experimenter often needs a method to verify that their source is indeed producing entangled pairs, i.e., an ``entanglement witness''. Following~\cite{PhysRevA.68.032103,BertaEtAl,PhysRevA.67.022320,PhysRevA.70.022316}, our main result offers a simple strategy for witnessing entanglement since it connects entanglement to uncertainty, which is experimentally measurable. In particular, Alice and Bob (in their distant labs, receiving $A$ and $B$ respectively) can sample from the source multiple times and communicate their results to gather statistics, say, regarding the $K_{\th}$ observable on $A$ and the $L_{\th}$ observable on $B$. Suppose they do this for a set of $n$ MUBs $\{K_{\th}\}_{\th=1}^n$ on $A$, with Bob measuring in a some arbitrary set of $n$ bases $\{L_{\th}\} _{\th=1}^n$ on $B$. They then estimate the joint probability distribution for each pair $\{K_{\th},L_{\th}\}$, and hence they can evaluate the classical entropies $H_2(K_{\th}|L_{\th})$. According to our main result, their source is necessarily entangled if
\begin{align}\label{eq:entwitness}
\sum_{\th=1}^n 2^{-H_2(K_{\th}|L_{\th})}>1+\frac{n-1}{d_A}\ .
\end{align}
Note that this method offers the flexibility of witnessing entanglement with $2\leq n\leq d_A +1$ observables (see also~\cite{PhysRevA.86.022311} for a different approach). For $n=2$, the same strategy based on the uncertainty relation~\eqref{eqn2} was implemented in~\cite{PHCFR, LXXLG}.

Another application of our main result is to quantum error correction of noisy quantum channels. By viewing~\eqref{eq:mainResult} from the dynamic perspective of Alice sending states through a quantum channel to Bob, and noting that the entanglement fidelity such as that appearing in~\eqref{eq:h2qoperational} is a standard figure of merit for quantum error correction, we see from~\eqref{eq:mainResult} that Bob's ability to error-correct is quantitatively linked to his ability to guess which states Alice sends, when she is sending basis elements from a complete set of MUBs.  

Our uncertainty equality~\eqref{eq:mainResult} also gives insights for studying the monogamy of correlations. The basic idea of monogamy is that $A$'s entanglement with $B$ limits the degree to which $A$ can be entangled with a third system, $E$. There have been several statements of monogamy in the literature; however, a nice aspect of our results is the potential to state monogamy as an equation rather than an inequality. We show that for any tripartite pure state $\rho_{ABE}$,
\begin{align}\label{eqn10001}
&\inf_{\sigma}D_{\frac{1}{2}}\left(\rho_{AE}\|\frac{\id_{A}}{d_{A}}\otimes\sigma_{E}\right)\notag\\
&=\log d_A -\log\Big[(d_{A}+1)\cdot P^{\pg}_{\text{guess}}(K|B \Theta)
 -1\Big]\ ,
\end{align} 
where the infimum is over all quantum states $\sigma_{E}$, and $D_{\frac{1}{2}}$ denotes the relative R\'enyi $1/2$-entropy (see Appendix~\ref{app:monogamy} for details). This relation states that Bob's guessing probability for a complete set of Alice's MUBs is a quantitative measure of the distance of $\rho_{AE}$ (Alice's and Eve's state) to a completely uncorrelated state. According to~\eqref{eqn10001}, Bob and Eve fight in a ``zero-sum game'' to be correlated to Alice, i.e., any gain of knowledge about Alice's system by Bob forces Eve's state to get closer to being uncorrelated with Alice, and conversely any gain of distance from the uncorrelated state by Eve forces Bob to lose knowledge.

Finally, we remark that the conditional R\'enyi 2-entropy is an important quantity in the study of classical and quantum randomness extractors against quantum side information (see e.g.~\cite{TRSS10,Dup09}). Since our uncertainty equality~\eqref{eq:mainResult} connects the conditional R\'enyi 2-entropy of the pre-measurement state to the conditional R\'enyi 2-entropy of the post-measurement state, our main result~\eqref{eq:mainResult} shines some light on the relation between classical and quantum extractors. It can be used to get a new perspective on the results in~\cite{6670761}, where security of the noisy storage model~\cite{Wullschleger09} was first linked to the quantum capacity.

%%%%%%%%%%%%%%%%%%%%%%%%%%%%%%%%%%%%%%%%%%%%%%%%%%%%%%%%%%%%%%%%

\subsection{Conclusions}

In summary, we considered a two-party guessing game, where Alice measures her system $A$ in one of $n$ possible complementary observables and Bob uses his system $B$ to help him guess Alice's outcome. We showed that Bob's probability for winning this game, assuming he does the ``pretty good measurement'' on $B$, is connected through an equality for $n = d_A+1$ (inequality for $n < d_A+1$) to the prior entanglement between $B$ and $A$. The latter is measured by the entanglement fidelity that can be recovered with the ``pretty good recovery map'', which we proved is given by the conditional R\'enyi 2-entropy. We therefore showed that our operationally-motivated equality can be thought of as an entropic uncertainty relation, and as such, connects Heisenberg's uncertainty principle to EPR's guessing game via an equation. We expect our approach to inspire further quantitative relations capturing the connection between uncertainty and entanglement. In addition, it would be interesting to explore the connection between the guessing game considered here and other non-local uncertainty games that have been considered in the literature, e.g., in the context of Bell inequalities \cite{RevModPhys.86.419} and steering \cite{PhysRevLett.98.140402,Smith:2012aa}.

%%%%%%%%%%%%%%%%%%%%%%%%%%%%%%%%%%%%%%%%%%%%%%%%%%%%%%%%%%%%%%%%

\section*{Acknowledgments}

PJC thanks J\k{e}drzej Kaniewski for helpful discussions. PJC and SW acknowledge support from the National Research Foundation and Ministry of Education, Singapore.

%%%%%%%%%%%%%%%%%%%%%%%%%%%%%%%%%%%%%%%%%%%%%%%%%%%%%%%%%%%%%%%%

\appendix

\section{Quantitative measures}\label{sec:methods}

\subsection{Entanglement}

Despite not being monotonic under local operations and classical communication, conditional entropies play an important role in entanglement theory~\cite{HHHH09}. For example, the conditional von Neumann entropy $H(A|B)$ quantifies the asymptotic rate for distilling EPR pairs via a one-way hashing protocol~\cite{DevWin05}. Another important conditional entropy studied in cryptography is the min-entropy. It was originally defined in an abstract form~\cite{RennerThesis05URL}, but was later given an intuitive operational meaning~\cite{KonRenSch09} in terms of the recoverable entanglement fidelity, i.e., $H_{\min}(A|B)=-\log [d_A\cdot F(A|B)]$ with
\begin{align}\label{eq:minop}
F(A|B)=\max_{\Lambda_{B\rightarrow A'}}F\left(\Phi_{AA'}, (\IC_{A} \ot \Lambda_{B\rightarrow A'})(\rho_{AB})\right)\ ,
\end{align}
where the maximum is over all quantum operations $\Lambda_{B\rightarrow A'}$ with $A'$ a copy of $A$. (See, e.g.~\cite{NieChu00} for discussion of the importance of the entanglement fidelity in quantum information theory.)

A related entropy measure is the conditional R\'enyi 2-entropy, which is defined as
\begin{align}\label{eq:h2def}
&H_2(A|B)=\notag\\
&-\log \Tr\left[\rho_{AB}(\id_A\otimes \rho_B)^{-1/2}\rho_{AB}(\id_A\otimes \rho_B)^{-1/2}\right]\ .
\end{align}
Here we give an operational meaning for the conditional R\'enyi 2-entropy by showing that, like the conditional min-entropy, it is linked to the recoverable entanglement fidelity in that $H_2(A|B)=-\log [d_A\cdot F^{\pg}(A|B)]$ with
\begin{align}\label{eq:h2alt}
F^{\pg}(A|B)=F(\Phi_{AA'},\mathcal{I}_{A}\otimes \Lambda_{B\rightarrow A'}^{\pg}(\rho_{AB}))\ ,
\end{align}
and $\Lambda^{\pg}_{B\rightarrow A'}$ is the pretty good recovery map. To see this, we note that the pretty good recovery map can be written as
\begin{align}
\Lambda^{\pg}_{B\rightarrow A'}(\cdot)=\frac{1}{d_A}\cdot\mathcal{E}\ad_{B\rightarrow A'}\left(\rho_B^{-1/2}(\cdot) \rho_B^{-1/2}\right)\ ,
\end{align}
where $\mathcal{E}\ad_{B\rightarrow A'}$ denotes the adjoint of the Choi-Jamilkowski map of $\rho_{AB}$,
\begin{align}
\mathcal{E}_{A\rightarrow B}(\cdot)=d_A\cdot\Tr_{A}\left[\left((\cdot)^T \ot \id_{B}\right)\rho_{AB}\right]\ .
\end{align}
Putting this in~\eqref{eq:h2def} we arrive at~\eqref{eq:h2alt}. The map $\Lambda^{\pg}_{B\rightarrow A'}$ is pretty good in the sense that it is close to optimal for recovering the maximally entangled state, i.e., we have~\cite{BarnKnil02}
\begin{align}\label{eq:FpgClose}
F^{2}(A|B)\leq F^{\pg}(A|B)\leq F(A|B)\ .
\end{align}
We also remark that both $F(A|B)$ and $F^{\pg}(A|B)$ are non-increasing under the action of local quantum channels acting on system $B$ and local unital (identity-preserving) quantum channels acting on system $A$ (see, e.g.~\cite{Muller-Lennert:2013aa}).

%%%%%%%%%%%%%%%%%%%%%%%%%%%%%%%%%%%%%%%%%%%%%%%%%%%%%%%%%%%%%%%%

\subsection{Uncertainty}

When measuring a bipartite quantum state $\rho_{AB}$ on $A$ in some basis $K=\{\ket{k}\}$, we arrive at a classical-quantum state
\begin{align}\label{eq:cqstate}
\rho_{KB}&=\sum_k (\dya{k}\ot\id_{B})\rho_{AB}(\dya{k}\ot\id_{B})\\
&=\sum_{k}\dya{k}\otimes\rho_{B}^{k}\ .
\end{align}
The conditional min-entropy of $\rho_{KB}$, using the formula for $F(K|B)$ from~\eqref{eq:minop}, translates to $H_{\min}(K|B)=-\log P_{\text{guess}}(K|B)$ with
\begin{align}
P_{\text{guess}}(K|B)=\max_{\{E^{k}_{B}\}}\sum_{k}\Tr\left[E_{B}^{k}\rho_{B}^{k}\right]\ ,
\end{align}
the probability for guessing $K$ correctly by performing the optimal measurement $\{E^{k}_{B}\}$ on the quantum side information $B$. The conditional min-entropy quantifies the uncertainty of $K$ in the exact sense of the uncertainty game, namely it quantifies the probability that Bob wins the uncertainty game.

The conditional R\'enyi 2-entropy of a classical-quantum state is again defined as in~\eqref{eq:h2def}. Furthermore, it was shown in~\cite{PhysRevA.78.022316} that its operational form~\eqref{eq:h2alt} is given by
\begin{align}\label{eq:h2altcq}
H_{2}(K|B)=-\log P^{\pg}_{\text{guess}}(K|B)\ ,
\end{align}
where $P^{\pg}_{\text{guess}}(K|B)$ denotes the probability of guessing $K$ by performing the ``pretty good measurement''~\cite{Hausladen94}. For the classical-quantum state~\eqref{eq:cqstate} the pretty good measurement operators are defined as
\begin{align}
\Pi^{k}_{B}=\rho_{B}^{-1/2}\rho_{B}^{k}\rho_{B}^{-1/2}\ .
\end{align}
By calculating $P^{\pg}_{\text{guess}}(K|B)=\sum_{k}\Tr\left[\Pi_{B}^{k}\rho_{B}^{k}\right]$, the equivalence of~\eqref{eq:h2altcq} to the definition of the conditional R\'enyi 2-entropy in~\eqref{eq:h2def} can be seen. Hence, the conditional R\'enyi 2-entropy corresponds to the probability that Bob wins the uncertainty game by using the pretty good measurement. It is known that the pretty good measurement performs close to optimal, i.e., analogous to~\eqref{eq:FpgClose} we have~\cite{Hausladen94}
\begin{align}\label{eq:PpgClose}
P^{2}_{\text{guess}}(K|B)\leq P^{\pg}_{\text{guess}}(K|B)\leq P_{\text{guess}}(K|B)\ .
\end{align}

In the following we will not only measure in one fixed basis, but with equal probability in one of $d_{A}+1$ MUBs. For that reason, we will work with the state
\begin{align}
&\rho_{KB\Theta}=\frac{1}{d_{A}+1}\notag\\
&\sum_{\theta=1}^{d_{A}+1}\sum_{k=1}^{d_{A}}(\dya{\theta_{k}}\ot\id_{B})\rho_{AB}(\dya{\theta_{k}}\ot\id_{B})\otimes\dya{\theta}_{\Theta}\ ,
\end{align}
where the elements of the $d_{A}+1$ MUBs $\theta$ are denoted by $\{\ket{\theta_{k}}\}$. It is straightforward to see that
\begin{align}
P^{\pg}_{\text{guess}}(K|B\Theta)=\frac{1}{d_{A}+1}\cdot\sum_{\theta}P^{\pg}_{\text{guess}}(K|B\Theta=\theta)\ .
\end{align}

%%%%%%%%%%%%%%%%%%%%%%%%%%%%%%%%%%%%%%%%%%%%%%%%%%%%%%%%%%%%%%%%

\section{Proof of main results}

\subsection{Full set of mutually unbiased bases}

Here we prove our main result, the uncertainty equality~\eqref{eq:mainResult}. For this we define $\rhot_{AB}= (\id_{A} \ot\rho_B^{-1/4})\rho_{AB}(\id_{A} \ot\rho_B^{-1/4})$ and rewrite the fully quantum conditional R\'enyi 2-entropy as $H_{2}(A|B)=-\log \Tr\left[\rhot_{AB}^2\right]$. Similarly, we rewrite the classical-quantum conditional R\'enyi 2-entropy as
\begin{align}
&H_{2}(K|B\Theta)=\notag\\
&-\log\left(\frac{1}{d_{A}+1}\cdot\sum_{\theta,k}\Tr_{B}\left[\Tr_{A}\left[\rhot_{AB}(\dya{\theta_k}\ot \id_{B})\right]^{2}\right]\right)\ .
\end{align}
Now we introduce the space $\HC_{A'B'}\cong \HC_{AB}$ as well as the state $\rhot_{A'B'} \cong \rhot_{AB}$. We have
\begin{align}
&(d_{A}+1)\cdot2^{-H_{2}(K|B\Theta)}\notag\\
&=\sum_{\theta,k}\Tr_B\big\{\Tr_A\big[(\dya{\theta_k}\ot \id_{B})\rhot_{AB}\big]\notag\\
&\qquad\qquad\quad\Tr_A\big[(\dya{\theta_k}\ot \id_{B}) \rhot_{AB}\big]\big\}\\
&=\sum_{\theta,k}\Tr_{BB'}\Tr_{AA'}\big[(\dya{\theta_k}\ot\dya{\theta_k})\notag\\
&\qquad\qquad\qquad\qquad\,\,\,(\rhot_{AB}\ot\rhot_{A'B'} ) F_{BB'}\big]\\
&=\Tr_{BB'}\Tr_{AA'}\big[(I_{AA'}+F_{AA'})(\rhot_{AB}\ot\rhot_{A'B'} ) F_{BB'}\big]\\
&=\Tr_{BB'}\Tr_{AA'}\big[(\rhot_{AB}\ot\rhot_{A'B'})F_{BB'}\big]\notag\\
&\quad+\Tr_{BB'}\Tr_{AA'}\big[F_{AA'}(\rhot_{AB}\ot\rhot_{A'B'})F_{BB'}\big]\\
&=\Tr_{B}\big[\Tr_A(\rhot_{AB} )\Tr_A( \rhot_{AB})\big]\notag\\
&\quad+\sum_{t,s}\Tr_{BB'}\Tr_{AA'}\big[(\dyad{t}{s}\ot \dyad{s}{t}\ot\id_{BB'})\notag\\
&\qquad\qquad\qquad\qquad\quad\,\,\,(\rhot_{AB}\ot\rhot_{A'B'})F_{BB'}\big]\\
&=1+\sum_{t,s}\Tr_{B}\big\{\Tr_A\big[(\dyad{t}{s}\ot \id)\rhot_{AB}\big]\notag\\
&\qquad\qquad\qquad\,\,\,\,\Tr_A\big[(\dyad{s}{t}\ot \id)\rhot_{AB}\big]\big\}\\
&=1+\Tr\left[\rhot_{AB}^2\right]\ ,\label{eq:mainproof}
\end{align}
where $F_{AA'}= \sum_{t,s}\dyad{t}{s}\ot \dyad{s}{t}$ is the operator that swaps $A$ and $A'$ (similarly for $F_{BB'}$). The second line uses the ``swap trick'', for operators $M$ and $N$, and swap operator $F$: $\Tr(MN)=\Tr[(M\ot N)F)$. The third line invokes that a full set of MUBs generates a complex projective 2-design~\cite{1523643}, that is,
\begin{align}
\sum_{\theta,k} \dya{\theta_{k}} \ot \dya{\theta_{k}}= I_{AA'}+F_{AA'}\ .
\end{align}
In Appendix~\ref{app:revisited}, we show that our result also holds for other measurements as long as they form a complex projective 2-design.

%%%%%%%%%%%%%%%%%%%%%%%%%%%%%%%%%%%%%%%%%%%%%%%%%%%%%%%%%%%%%%%%

\subsection{Fewer bases}

Here, we derive the upper and lower bounds~\eqref{eqn8} and~\eqref{eqn89} on the uncertainty when Alice measures in $1\leq n<d_{A}+1$ MUBs. It is helpful to first analyze the case for one basis $K$.

\begin{lemma}
Let $K=\{\ket{k}\}$ be an orthonormal basis on some Hilbert space $\HC_A$. Then, we have for any bipartite quantum state $\rho_{AB}$ that
\begin{align}\label{eqn21}
P^{\pg}_{\guess}(K|B)\geq F^{\pg}(A|B)\ ,
\end{align}
where $\rho_{KB}=\sum_k (\dya{k}\ot \id_{B})\rho_{AB}(\dya{k}\ot \id_{B})$.
\end{lemma}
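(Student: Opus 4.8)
The plan is to rewrite both sides of~\eqref{eqn21} in terms of the ``tilded'' state $\rhot_{AB}=(\id_A\ot\rho_B^{-1/4})\rho_{AB}(\id_A\ot\rho_B^{-1/4})$ already introduced in the proof of the main result. From the Methods we have $d_A F^{\pg}(A|B)=2^{-H_2(A|B)}=\Tr[\rhot_{AB}^2]$, so the right-hand side of~\eqref{eqn21} is $\frac{1}{d_A}\Tr[\rhot_{AB}^2]$. For the left-hand side I would introduce the pinching (dephasing) channel $\PC(\cdot)=\sum_k(\proj{k}\ot\id_B)(\cdot)(\proj{k}\ot\id_B)$ that measures $A$ in the basis $K$. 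Writing $\rho_B^k=\bra{k}\rho_{AB}\ket{k}$ and noting that the diagonal block $\bra{k}\rhot_{AB}\ket{k}=\rho_B^{-1/4}\rho_B^k\rho_B^{-1/4}$, a direct computation with the pretty-good operators $\Pi_B^k=\rho_B^{-1/2}\rho_B^k\rho_B^{-1/2}$ shows $P^{\pg}_{\guess}(K|B)=\sum_k\Tr[\Pi_B^k\rho_B^k]=\Tr[\PC(\rhot_{AB})^2]$. Thus~\eqref{eqn21} becomes the clean operator statement $\Tr[\PC(\rhot_{AB})^2]\geq\frac{1}{d_A}\Tr[\rhot_{AB}^2]$.

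To establish this I would invoke the pinching inequality: since $\PC$ is the pinching associated with the $d_A$ orthogonal projectors $\{\proj{k}\ot\id_B\}_k$, every positive operator obeys $\rhot_{AB}\leq d_A\,\PC(\rhot_{AB})$. Because $\rhot_{AB}\geq0$ and $d_A\PC(\rhot_{AB})-\rhot_{AB}\geq0$, and the trace of a product of two positive operators is non-negative, this gives $\Tr[\rhot_{AB}^2]\leq d_A\Tr[\rhot_{AB}\PC(\rhot_{AB})]$. Finally, since $\PC$ is a self-adjoint idempotent with respect to the Hilbert-Schmidt inner product, one has $\Tr[\rhot_{AB}\PC(\rhot_{AB})]=\Tr[\PC(\rhot_{AB})^2]$, which closes the argument.

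A self-contained alternative that avoids quoting the pinching inequality is to diagonalize $\rhot_{AB}=\sum_i p_i\proj{\phi_i}$ and set $\ket{\phi_i^k}:=(\bra{k}_A\ot\id_B)\ket{\phi_i}\in\HC_B$. One then checks that $\Tr[\rhot_{AB}^2]=\sum_{ij}p_ip_j|\ip{\phi_i}{\phi_j}|^2$ with $\ip{\phi_i}{\phi_j}=\sum_k\ip{\phi_i^k}{\phi_j^k}$, whereas $\Tr[\PC(\rhot_{AB})^2]=\sum_{ij}p_ip_j\sum_k|\ip{\phi_i^k}{\phi_j^k}|^2$. The desired inequality then follows term by term from the Cauchy-Schwarz bound $|\sum_k\ip{\phi_i^k}{\phi_j^k}|^2\leq d_A\sum_k|\ip{\phi_i^k}{\phi_j^k}|^2$ on the $d_A$-term sum over $k$.

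The one genuinely non-trivial point is controlling the off-diagonal blocks $\bra{k}\rhot_{AB}\ket{k'}$ (the coherences destroyed by $\PC$) relative to the diagonal ones; this is precisely what positivity of $\rhot_{AB}$ buys, and it is what both the pinching inequality and the Cauchy-Schwarz step encode. Without the assumption $\rhot_{AB}\geq0$ the bound is false, so the main care is to use positivity rather than mere Hermiticity. The extremal cases (for instance a state that is maximally entangled in the $K$ basis) saturate the Cauchy-Schwarz step and confirm that the factor $1/d_A$ is optimal.
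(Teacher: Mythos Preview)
Your proof is correct. Both you and the paper ultimately reduce the claim to the pinching inequality $X\leq d_A\,\PC(X)$ for positive $X$, but the packaging differs. The paper stays in the operational picture: it rewrites $P^{\pg}_{\guess}(K|B)=F\big(\Phi_{AA'},\,d_A(\IC_A\ot\Lambda^{\pg}_{B\to A'})(\rho_{KB})\big)$, establishes the operator inequality $d_A(\IC_A\ot\Lambda^{\pg})(\rho_{KB})\geq(\IC_A\ot\Lambda^{\pg})(\rho_{AB})$ on $AA'$ by proving the pinching inequality explicitly via the $Z$-twirl map $\FC(\cdot)=\frac{1}{d_A-1}\sum_{m=1}^{d_A-1}Z^m(\cdot)(Z^m)^\dagger$, and then invokes monotonicity of the fidelity in one argument. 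Your route is more direct: you stay on $AB$, express both sides as $\Tr[\PC(\rhot_{AB})^2]$ and $\frac{1}{d_A}\Tr[\rhot_{AB}^2]$, and close with the pinching inequality combined with the self-adjoint idempotency of $\PC$ in the Hilbert--Schmidt inner product. This avoids introducing $A'$, the recovery map, and the fidelity-monotonicity step, at the cost of quoting the pinching inequality rather than deriving it; your Cauchy--Schwarz alternative is a clean self-contained substitute and in fact encodes exactly the same off-diagonal control that the paper obtains from positivity of $(\FC\ot\IC)(\sg_{AA'})$.
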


\begin{proof}
We calculate
\begin{align}\label{eqn22}
P^{\pg}_{\guess}(K|B)&= \Tr\left[\rho_{KB}\rho_B^{-1/2}\rho_{KB}\rho_B^{-1/2}\right]\\
&=\Tr\left[ \rho_{AB}\rho_B^{-1/2}\rho_{KB}\rho_B^{-1/2}\right]\\
&= d_A\cdot\Tr\left[\Phi_{AA'} (\IC_{A}\ot \Lambda^{\pg}_{B\rightarrow A'})(\rho_{KB})\right]\\
&=F\left(\Phi_{AA'}, d_A\cdot(\IC_{A}\ot\Lambda^{\pg}_{B\rightarrow A'})(\rho_{KB})\right)\\
&\geq F\left(\Phi_{AA'}, (\IC_{A}\ot \Lambda^{\pg}_{B\rightarrow A'})(\rho_{AB})\right)\\
&=F^{\pg}(A|B)\ .
\end{align}
The inequality step in this proof invoked the property that the fidelity decreases upon decreasing one of its arguments, and hence it remains to show
\begin{align}\label{eqn32049}
d_A\cdot(\IC_{A}\ot \Lambda^{\pg}_{B\rightarrow A'})(\rho_{KB})\geq (\IC_{A}\ot \Lambda^{\pg}_{B\rightarrow A'})(\rho_{AB})\ .
\end{align}
We denote the non-negative operator $\sg_{AA'}=(\IC_{A}\ot\Lambda^{\pg}_{B\rightarrow A'})(\rho_{AB} )$, and note that the measurement in $K$ on the $A$-system commutes with $\IC_{A}\ot\Lambda^{\pg}_{B\rightarrow A'}$. We get
\begin{align}
&d_A\cdot(\IC_{A}\ot \Lambda^{\pg}_{B\rightarrow A'})(\rho_{KB})-(\IC_{A}\ot \Lambda^{\pg}_{B\rightarrow A'})(\rho_{AB})\notag\\ 
&=d_A\cdot\sum_k (\dya{k}\ot \id_{A'})\sg_{AA'}(\dya{k}\ot \id_{A'})\notag\\
&\quad-\sum_{k,k'} (\dyad{k}{k}\ot \id_{A'}) \sg_{AA'} (\dyad{k'}{k'}\ot \id_{A'})\\
&=(d_A-1)[ \sum_k (\dya{k}\ot \id)\sg_{AA'}(\dya{k}\ot \id)\notag\\
&\quad-\frac{1}{d_A-1}\sum_{k,k'\neq k} (\dyad{k}{k}\ot \id) \sg_{AA'}(\dyad{k'}{k'}\ot \id)]\\
&=(d_A-1)(\FC\ot \IC)(\sg_{AA'})\ ,
\end{align}
where we set in the last line
\begin{align}
&\FC(\cdot)=\frac{1}{d_A-1}\sum_{m=1}^{d_A-1} Z^m (\cdot) ( Z^m)\ad,\quad Z = \sum_{k=0}^{d_A-1} \om^{k} \dya{k}\ ,\notag\\
&\om = e^{2\pi i /d_A}\ .
\end{align}
Since $\FC$ is a completely positive trace preserving map, the claim follows.
\end{proof}

Equation~\eqref{eqn21} states that the entanglement fidelity quantified by $F^{\pg}(A|B)$ lower bounds the guessing probability $P^{\pg}_{\text{guess}}(K|B)$. By combining~\eqref{eqn21} with our uncertainty equality~\eqref{eq:mainResult}, we get that for subset of size $1\leq n<d_A+1$ of a complete set of MUBs,
\begin{align}
&\sum_{\theta=1}^{n}P^{\pg}_{\text{guess}}(K|B\Theta=\theta)\notag\\
&=(n-1)\cdot F^{\pg}(A|B)+1+\Big((d_A+1-n)\cdot F^{\pg}(A|B)\notag\\
&\qquad\qquad\qquad\qquad\qquad\qquad-\sum_{\theta=n+1}^{d_A+1} P^{\pg}_{\text{guess}}(K|B\Theta=\theta)\Big)\\
&\leq (n-1)\cdot F^{\pg}(A|B)+1\ ,
\end{align}
and this proves~\eqref{eqn89}. Similarly, we can invoke the immediate relation $P^{\pg}_{\text{guess}}(K|B)\geq1/d_A$ to get
\begin{align}
&\sum_{\theta=1}^{n}P^{\pg}_{\text{guess}}(K|B\Theta=\theta)\notag\\
&=d_A\cdot F^{\pg}(A|B)+\frac{n-1}{d_A}\notag\\
&\quad+\left(\frac{d_A+1-n}{d_A}-\sum_{\theta=n+1}^{d_A+1} P^{\pg}_{\text{guess}}(K|B\Theta=\theta)\right)\\
&\leq d_A\cdot F^{\pg}(A|B)+\frac{n-1}{d_A}\ ,
\end{align}
and this proves~\eqref{eqn8}. The tightness of the bounds in~\eqref{eqn8} and~\eqref{eqn89} follows by construction. In the region $F^{\pg}(A|B)\geq 1/d_A$, the upper bound is achieved by a bipartite pure state whose Schmidt basis is one of the $\Theta$ bases appearing in the sum of guessing probabilities under consideration, and the lower bound is achieved by a bipartite pure state whose Schmidt basis is one of the $\Theta$ bases that belongs to the same complete MUB set as the bases under consideration, but whose guessing probability was removed from the sum under consideration.  In the region $F^{\pg}(A|B)\leq 1/d_A$, the upper bound is achieved by a tensor product state $\rho_A \ot \rho_B$ such that $\rho_A$ is diagonal in one of the $\Theta$ bases appearing in the sum of guessing probabilities under consideration, and the lower bound is similarly achieved by such a tensor product state where $\rho_A$ is diagonal in one of the $\Theta$ bases that belongs to the same complete MUB set as the bases under consideration, but whose guessing probability was removed from the sum under consideration.

%%%%%%%%%%%%%%%%%%%%%%%%%%%%%%%%%%%%%%%%%%%%%%%%%%%%%%%%%%%%%%%%

\subsection{Monogamy of correlations}\label{app:monogamy}

Here we show~\eqref{eqn10001} from the main text. The precise statement is as follows.

\begin{corollary}
Let $\{\Theta\}_{\theta\in\Theta}$ be a complete set of MUBs on some Hilbert space $\HC_A$, and denote $\theta=\{\ket{\theta_{k}}\}_{k=1}^{d_{A}}$. Then, we have for any tripartite pure quantum state $\rho_{ABE}$ that
\begin{align}
&\inf_{\sigma}D_{\frac{1}{2}}\left(\rho_{AE}\|\frac{\id_{A}}{d_{A}}\otimes\sigma_{E}\right)=\notag\\
&\log d_A -\log\Big[(d_{A}+1)\cdot P^{\pg}_{\text{guess}}(K|B \Theta)
-1\Big]\ ,
\end{align} 
where the infimum is over all quantum states $\sigma_{E}$, and the relative R\'enyi $1/2$-entropy is given by
\begin{align}
&D_{\frac{1}{2}}\left(\rho_{AE}\|\frac{\id_{A}}{d_{A}}\otimes\sigma_{E}\right)=\notag\\
&-\log\left(\mathrm{tr}\left[\rho_{AE}^{1/2}\left(\frac{\id_{A}}{d_{A}}\otimes\sigma_{E}\right)^{1/2}\right]\right)^{2}\ .
\end{align}
\end{corollary}

\begin{proof}
For any conditional entropy that is invariant under local isometries on the conditioning system, one can define a dual entropy. For some generic entropy $H_K$, the dual entropy $H^{\mathrm{dual}}_{K}$ is defined by
\begin{align}
H_K(A|B)=-H^{\mathrm{dual}}_{K}(A|E)_{\rho}\ ,
\end{align}
where $E$ is a system that purifies $\rho_{AB}$. Since $H_2(A|B)$ is invariant under local isometries on $B$, the dual entropy is well defined, and it is known that~\cite{Tomamichel13}
\begin{align}\label{eq:h2dual}
-H^{\mathrm{dual}}_{2}(A|E)=\inf_{\sigma}D_{\frac{1}{2}}(\rho_{AE}\|\id_{A}\otimes\sigma_{E})\ .
\end{align}
By the standard rewriting
\begin{align}\label{eq:0conrel}
D_{\frac{1}{2}}\left(\rho_{AE}\|\id_{A}\otimes\sigma_{E}\right)=D_{\frac{1}{2}}\left(\rho_{AE}||\frac{\id_{A}}{d_A}\ot \sigma_{E}\right)-\log d_A \ ,
\end{align}
the claim follows from our main result~\eqref{eq:mainResult}.
\end{proof}

%%%%%%%%%%%%%%%%%%%%%%%%%%%%%%%%%%%%%%%%%%%%%%%%%%%%%%%%%%%%%%%%

\section{Applications}

\subsection{Witnessing entanglement}

Here we show the origin of~\eqref{eq:entwitness}, our condition for witnessing entanglement. We will make use of the following lemma, which says that separable states cannot have a negative conditional entropy.

\begin{lemma}\label{lem:seperable}
Let $\rho_{AB}$ be a separable quantum state. Then, we have that
\begin{align}
H_2(A|B)\geq H_{\min}(A|B) \geq 0\ .
\end{align}
\end{lemma}

\begin{proof}
The inequality $H_2(A|B)\geq H_{\min}(A|B)$ holds for any quantum state $\rho_{AB}$ since $\Lambda^{\pg}_{B\rightarrow A'}$ in~\eqref{eq:h2alt} is a particular map, and the conditional min-entropy involves an optimization over all maps $\Lambda_{B\rightarrow A'}$ in~\eqref{eq:minop}.

To prove $H_{\min}(A|B)\geq 0$, note that any local operation on a separable state results in another separable state. Now suppose $\sg_{AA'}=(\IC_A \ot \hat{\Lambda}_{B\rightarrow A'}) (\rho_{AB})$ is the separable state that achieves the optimization when evaluating the conditional min-entropy for $\rho_{AB}$ (i.e., $\hat{\Lambda}$ is the optimal channel in~\eqref{eq:h2alt}. Then, we have that
\begin{align}
F(A|B)=F\left(\Phi_{AA'}, \sg_{AA'}\right)&\leq F\left(\Phi_{AA'}, \id_A \ot \sg_{A'}\right)\\
&=1/d_A\ ,
\end{align}
which follows because the fidelity increases upon increasing one of its arguments, and because for separable $\sg_{AA'}$ we have $\sg_{AA'}\leq \id_A \ot \sg_{A'}$ with $\sg_{A'}=\Tr_A(\sg_{AA'})$. Using $H_{\min}(A|B)=-\log [d_A\cdot F(A|B) ]$, the claim follows.
\end{proof}

The following is our criterion for witnessing entanglement.

\begin{lemma}
Let $\rho_{AB}$ be a separable quantum state. Let $\{K_{\th}\}_{\th=1}^n$ be a subset (of size $n$) of a complete set of MUBs on $A$, and let $\{L_{\th}\} _{\th=1}^n$ be an arbitrary set of $n$ orthonormal bases on $B$. Then, we have that
\begin{align}\label{eq:sepcondition}
\sum_{\th=1}^n 2^{-H_2(K_{\th}|L_{\th})}\leq1+\frac{n-1}{d_A}\ ,
\end{align}
where
\begin{align}
\rho_{K_{\th}L_{\th}}=\sum_{p,q}&(\dya{K_{\th,p}}\ot \dya{L_{\th,q}})\rho_{AB}\notag\\
&(\dya{K_{\th,p}}\ot \dya{L_{\th,q}})\ .
\end{align}
\end{lemma}

\begin{proof}
Since $\rho_{AB}$ is separable, the previous lemma (Lemma~\ref{lem:seperable}) tells us that $F^{\pg}(A|B)\leq 1/d_A$. Combining this with the bound in~\eqref{eqn8} we have
\begin{align}
\frac{n-1}{d_A}+1&\geq(n-1)\cdot F^{\pg}(A|B)+1\\
&\geq\sum_{\th=1}^n P^{\pg}_{\text{guess}}(K_{\th}|B)\\
&=\sum_{\th=1}^n 2^{-H_2(K_{\th}|B)}\\
&\geq\sum_{\th=1}^n 2^{-H_2(K_{\th}|L_{\th})}\ ,
\end{align}
where the last inequality follows because the conditional R{\'e}nyi $2$-entropy satisfies the data-processing inequality~\cite{Muller-Lennert:2013aa}.
\end{proof}

%%%%%%%%%%%%%%%%%%%%%%%%%%%%%%%%%%%%%%%%%%%%%%%%%%%%%%%%%%%%%%%%

\subsection{Quantum cryptography}\label{app:crypto}

Here we show the uncertainty and certainty relations~\eqref{eqn662} and~\eqref{eqn663} in terms of the smooth conditional min-entropy. For a bipartite quantum state $\rho_{AB}$ and smoothing parameter $\varepsilon\geq0$, the smooth conditional min-entropy is defined as
\begin{align}
H^{\varepsilon}_{\min}(A|B)_{\rho}=\sup_{\bar{\rho}_{AB}}H_{\min}(A|B)_{\bar{\rho}}\ ,
\end{align}
where the supremum is over all sub-normalized states $\bar{\rho}_{AB}$ on $AB$ that are $\varepsilon$-close to $\rho_{AB}$ in purified distance~\cite{TomamichelThesis2012}. Now, the crucial point is that $H^{\varepsilon}_{\min}$ and $H_2$ are equivalent in the following sense: it holds for any bipartite quantum state $\rho_{AB}$ and $\varepsilon>0$ that (see e.g.~\cite[Lemma A.25]{Berta13}),
\begin{align}\label{eq:H2Hmin}
H_{\min}(A|B)\leq H_{2}(A|B)\leq H^{\varepsilon}_{\min}(A|B)+\log\frac{2}{\varepsilon^{2}}\ .
\end{align}
By combining~\eqref{eq:H2Hmin} with our main result~\eqref{eq:mainResult}, we obtain the uncertainty and certainty relation for the smooth conditional min-entropy as given in~\eqref{eqn662} and~\eqref{eqn663}. We note that similar uncertainty relations have been derived in~\cite{6670761,Dupuis:2013aa}, and were used to analyze security in the noisy storage model.

%%%%%%%%%%%%%%%%%%%%%%%%%%%%%%%%%%%%%%%%%%%%%%%%%%%%%%%%%%%%%%%%

\section{Extensions of main result}

\subsection{Complex projective 2-designs}\label{app:revisited}

We have seen in Appendix~\ref{sec:methods} that the proof of our uncertainty equality~\eqref{eq:mainResult} crucially relies on the fact that a full set of MUBs generates a complex projective 2-design. In general, a complex projective 2-design is a set $\{\ket{\psi_y}\}_{y\in Y}$ (of size $|Y|$) of vectors $\ket{\psi_y}$ lying in a Hilbert space $\mathcal{H}_{A}$ such that
\begin{align}\label{eq:twodesignidentity}
\frac{1}{|Y|}\cdot\sum_{y\in Y}\dya{\psi_y}^{\otimes2}=\frac{1}{d_{A}\cdot(d_{A}+1)}(\id_{AA'}+F_{AA'})\ ,
\end{align}
where $F_{AA'}$ denotes the swap operator, and $A'$ a copy of $A$. It turns out that there are other ``informationally equivalent'' measurements that generate a complex projective 2-design. As an example we mention SIC-POVMs, such as the four states forming a tetrahedron on the Bloch sphere for qubits.

\begin{corollary}
Let $ \{\frac{1}{d_A}\cdot\dya{\psi_k}\}_{k=1}^{d_A^2}$ be a SIC-POVM on some Hilbert space $\mathcal{H}_{A}$. Then, we have for any bipartite quantum state $\rho_{AB}$ that
\begin{align}\label{eqn26}
 H_2(K|B)=\log[d_A(d_{A}+1)]-\log\left(2^{-H_2(A|B)}+1\right)\ ,
\end{align}
where
\begin{align}
\rho_{KB}= \sum_{k=1}^{d_{A}^{2}} \dya{k}\ot \Tr_A[(\frac{1}{d_A}\cdot\dya{\psi_k} \ot \id_{B})\rho_{AB}]
\end{align}
is a classical-quantum state with $\{\ket{k}\}$ an orthonormal basis on $\HC_K$.
\end{corollary}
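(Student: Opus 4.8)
The plan is to reproduce the calculation of the main result in~\eqref{eq:mainproof} almost verbatim, changing only the normalization constants that arise because a SIC-POVM consists of $d_A^2$ vectors (rather than the $d_A(d_A+1)$ vectors of a full MUB set) and because each POVM element carries a prefactor $1/d_A$.

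First I would set up the twisted operator $\rhot_{AB}=(\id_A\ot\rho_B^{-1/4})\rho_{AB}(\id_A\ot\rho_B^{-1/4})$ exactly as in the Methods section, so that $2^{-H_2(A|B)}=\Tr[\rhot_{AB}^2]$. The key preliminary observation is that the $B$-marginal of the post-measurement state $\rho_{KB}$ coincides with $\rho_B$: since the SIC elements obey $\sum_k\frac{1}{d_A}\dya{\psi_k}=\id_A$, one has $\Tr_K[\rho_{KB}]=\Tr_A[\rho_{AB}]=\rho_B$, which is what permits the \emph{same} twist $\rho_B^{-1/4}$ to be used for both entropies. Writing $\rho_B^k=\frac{1}{d_A}\Tr_A[(\dya{\psi_k}\ot\id)\rho_{AB}]$ and pulling the $\rho_B^{-1/4}$ factors through the partial trace over $A$, I would arrive at
\begin{align}
2^{-H_2(K|B)}=\sum_k\Tr_B\big[(\rho_B^{-1/4}\rho_B^k\rho_B^{-1/4})^2\big]=\frac{1}{d_A^2}\sum_k\Tr_B\big[\Tr_A[(\dya{\psi_k}\ot\id)\rhot_{AB}]^2\big]\ .
\end{align}

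Next I would apply the swap trick and introduce a copy $\rhot_{A'B'}\cong\rhot_{AB}$, exactly as in the main proof, to obtain
\begin{align}
2^{-H_2(K|B)}=\frac{1}{d_A^2}\Tr_{BB'}\Tr_{AA'}\Big[\big(\sum_k\dya{\psi_k}\ot\dya{\psi_k}\big)(\rhot_{AB}\ot\rhot_{A'B'})F_{BB'}\Big]\ .
\end{align}
Here the single substantive substitution is the 2-design identity for a SIC-POVM, which follows from~\eqref{eq:twodesignidentity} with $|Y|=d_A^2$, namely $\sum_k\dya{\psi_k}^{\otimes 2}=\frac{d_A}{d_A+1}(\id_{AA'}+F_{AA'})$. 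Inserting this, the operator $\id_{AA'}+F_{AA'}$ is precisely the one appearing in the third line of~\eqref{eq:mainproof}, so I may quote the remainder of that computation verbatim to conclude that $\Tr_{BB'}\Tr_{AA'}[(\id_{AA'}+F_{AA'})(\rhot_{AB}\ot\rhot_{A'B'})F_{BB'}]=1+\Tr[\rhot_{AB}^2]=1+2^{-H_2(A|B)}$.

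Combining the constants then gives $2^{-H_2(K|B)}=\frac{1}{d_A(d_A+1)}\big(2^{-H_2(A|B)}+1\big)$, and taking logarithms yields~\eqref{eqn26}. The proof carries no genuine obstacle beyond the bookkeeping of these constants: the only place the SIC structure enters is through the normalization $d_A/(d_A+1)$ of the design identity, which combines with the two factors of $1/d_A$ supplied by the POVM elements to turn the $\log(d_A+1)$ of the MUB result into $\log[d_A(d_A+1)]$. The one point to verify carefully is that the SIC-POVM indeed forms a complex projective 2-design, so that~\eqref{eq:twodesignidentity} applies; granting this, the rest is an exact transcription of the main argument.
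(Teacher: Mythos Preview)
Your proposal is correct and follows exactly the approach the paper indicates: the paper's own proof of this corollary is simply the remark that it is ``identical to the proof of~\eqref{eq:mainResult}, with the appropriate version of~\eqref{eq:twodesignidentity} substituted into the proof~\eqref{eq:mainproof},'' which is precisely what you carry out. Your explicit check that $\Tr_K[\rho_{KB}]=\rho_B$ and your tracking of the normalization constants $1/d_A^2$ and $d_A/(d_A+1)$ are exactly the bookkeeping details needed, and the SIC-POVM is indeed a complex projective 2-design, so there is no gap.
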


Notice that the $d_A$-dependent term on the r.h.s.\ of~\eqref{eqn26} is slightly different from the corresponding term appearing in our main result~\eqref{eq:mainResult}, and indeed~\eqref{eqn26} implies that $\log d_A \leq H_2(K|B)\leq 2\log d_A$ for SIC-POVMs. Nevertheless, the proof of~\eqref{eqn26} is identical to the proof of~\eqref{eq:mainResult}, with the appropriate version of~\eqref{eq:twodesignidentity} substituted into the proof.

In addition, so-called unitary 2-designs are closely related to complex projective 2-designs. A set $\{U_{y}\}_{y\in Y}$ (of size $|Y|$) of unitaries $U_{y}$ on some Hilbert space $\mathcal{H}_{A}$ forms a unitary two-design if
\begin{align}
\frac{1}{|Y|}\cdot\sum_{y\in Y}\left(U_{y}\otimes U_{y}^{\dagger}\right)^{\otimes2}=\int_{U(d_{A})} \left(U\otimes U^{\dagger}\right)^{\otimes2}dU\ ,
\end{align}
where the integration  is over all unitaries with respect to the Haar measure. Examples of unitary two-designs are the full unitary group, or the Clifford group for qubit systems. In fact, unitary 2-designs generate complex projective 2-designs.

\begin{lemma}
Let $\{U_{\theta}\}_{\theta\in\Theta}$ be a unitary two-design on some Hilbert space $\mathcal{H}_{A}$. Then, we have
\begin{align}
&\frac{1}{d_{A}\cdot|\Theta|}\cdot\sum_{k=1}^{d_{A}}\sum_{\theta\in\Theta}(U_{\theta}\ket{k}\bra{k}U_{\theta}^{\dagger})^{\otimes2}\notag\\
&=\frac{1}{d_{A}\cdot(d_{A}+1)}(\id_{AA'}+F_{AA'})\ ,
\end{align}
where $\{\ket{k}\}$ denotes some orthonormal basis of $\HC_{A}$.
\end{lemma}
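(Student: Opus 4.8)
The plan is to reduce the identity to the standard Schur--Weyl evaluation of the Haar twirl on two copies, using the defining property of a unitary $2$-design to trade the discrete average over $\Theta$ for the integral over $U(d_A)$.

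First I would write $P_k=\dya{k}$ and note that $(U_\theta\dya{k}U_\theta^\dagger)^{\otimes2}=(U_\theta\ot U_\theta)(P_k\ot P_k)(U_\theta^\dagger\ot U_\theta^\dagger)$, so that the left-hand side becomes $\frac{1}{d_A}\sum_{k}\mathcal{T}_\Theta(P_k\ot P_k)$, where $\mathcal{T}_\Theta(\cdot)=\frac{1}{|\Theta|}\sum_\theta(U_\theta\ot U_\theta)(\cdot)(U_\theta^\dagger\ot U_\theta^\dagger)$ is the two-copy conjugation twirl over the design. The defining $2$-design identity $\frac{1}{|\Theta|}\sum_\theta(U_\theta\ot U_\theta^\dagger)^{\otimes2}=\int_{U(d_A)}(U\ot U^\dagger)^{\otimes2}\,dU$ is precisely the statement that this discrete twirl agrees, as a superoperator on operators over $\HC_A\ot\HC_A$, with the corresponding Haar twirl $\mathcal{T}_{\mathrm{Haar}}$. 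Hence I may replace $\mathcal{T}_\Theta$ by $\mathcal{T}_{\mathrm{Haar}}$ on every input, in particular on $P_k\ot P_k$.

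Second I would evaluate $\mathcal{T}_{\mathrm{Haar}}(P_k\ot P_k)$ by Schur's lemma: its image commutes with $U\ot U$ for all $U$, so it lies in the span of $\{\id_{AA'},F_{AA'}\}$, say $\alpha\,\id_{AA'}+\beta\,F_{AA'}$. Taking the trace and the trace against $F_{AA'}$, and using $\Tr[\id_{AA'}]=d_A^2$, $\Tr[F_{AA'}]=d_A$, $\Tr[F_{AA'}^2]=d_A^2$, gives the linear system $\alpha d_A^2+\beta d_A=\Tr[P_k\ot P_k]$ and $\alpha d_A+\beta d_A^2=\Tr[(P_k\ot P_k)F_{AA'}]$. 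The swap trick yields $\Tr[(P_k\ot P_k)F_{AA'}]=\Tr[P_k^2]=1$, and plainly $\Tr[P_k\ot P_k]=1$, so by symmetry $\alpha=\beta=1/\big(d_A(d_A+1)\big)$, i.e. $\mathcal{T}_{\mathrm{Haar}}(P_k\ot P_k)=\frac{1}{d_A(d_A+1)}(\id_{AA'}+F_{AA'})$. Summing over $k=1,\dots,d_A$ and multiplying by $1/d_A$ then produces a factor $d_A$ that cancels the prefactor, leaving exactly $\frac{1}{d_A(d_A+1)}(\id_{AA'}+F_{AA'})$.

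The only delicate point — and the likely main obstacle — is the identification in the first step between the superoperator form of the $2$-design axiom, written with $U\ot U^\dagger$ in the ``double-ket'' vectorization convention, and the conjugation twirl $\mathcal{T}_\Theta$ that I actually apply to $P_k\ot P_k$. Once the vectorization convention is fixed this is a routine bookkeeping identification, after which the equality of discrete and Haar averages transfers verbatim to $\mathcal{T}_\Theta$, and the remaining content is the textbook two-copy twirl computed above.
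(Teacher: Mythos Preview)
The paper does not actually prove this lemma; it is stated with a citation to \cite{Horodecki:2007fk} and no proof is given. So there is no ``paper's own proof'' to compare against.

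Your argument is correct and is the standard one. The reduction to $\frac{1}{d_A}\sum_k\mathcal{T}_\Theta(P_k\ot P_k)$ is immediate, the replacement of the design twirl by the Haar twirl is precisely the content of the unitary $2$-design definition, and the Schur--Weyl computation of $\mathcal{T}_{\mathrm{Haar}}(P_k\ot P_k)$ via the two trace constraints is carried out correctly, giving $\alpha=\beta=1/\big(d_A(d_A+1)\big)$ and hence the result after the $k$-sum. The caveat you flag about matching the paper's $U\ot U^\dagger$ convention to the conjugation twirl is real but, as you say, purely notational: under the vectorization $\ket{i}\bra{j}\mapsto\ket{i}\ot\ket{j}$ conjugation by $U$ becomes $U\ot\bar U$, and the paper's $U^\dagger$ is a (common) shorthand for $\bar U$ in this context, so the identity of superoperators transfers directly.
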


Hence, our main result also holds for unitary two-designs.

\begin{corollary}
Let $\{U_\theta\}_{\theta\in\Theta}$ be a unitary two-design on some Hilbert space $\HC_A$. Then, we have for any bipartite quantum state $\rho_{AB}$ that
\begin{align}
H_{2}(K|B\Theta)=\log(d_{A}+1)-\log\left(2^{-H_2(A|B)}+1\right)\ ,
\end{align}
where
\begin{align}
\rho_{KB\Theta}=\frac{1}{|\Theta|}\sum_{\theta,k}(\dya{\theta_{k}}\ot\id_{B})\rho_{AB}(\dya{\theta_{k}}\ot\id_{B})\otimes\dya{\theta}_{\Theta}\ ,
\end{align}
and $\ket{\theta_{k}}=U_{\theta}\ket{k}$ for some orthonormal basis $\{\ket{k}\}$ of $\HC_{A}$.
\end{corollary}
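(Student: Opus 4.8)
The plan is to run the identical computation as in the proof of our main result~\eqref{eq:mainproof}, since that argument used the measurement bases only through the complex projective 2-design identity $\sum_{\theta,k}\dya{\theta_k}\ot\dya{\theta_k}=\id_{AA'}+F_{AA'}$. The only things that change are the overall normalization and the precise form of this identity. First I would invoke the preceding Lemma~\cite{Horodecki:2007fk}: setting $\ket{\theta_k}=U_\theta\ket{k}$ and multiplying both sides by $d_A|\Theta|$ yields
\begin{align}
\sum_{\theta,k}\dya{\theta_k}\ot\dya{\theta_k}=\frac{|\Theta|}{d_A+1}\left(\id_{AA'}+F_{AA'}\right)\ .
\end{align}
Note that for a full set of MUBs one has $|\Theta|=d_A+1$, so this reduces exactly to the identity used in~\eqref{eq:mainproof}.

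Next, writing $\rhot_{AB}=(\id_A\ot\rho_B^{-1/4})\rho_{AB}(\id_A\ot\rho_B^{-1/4})$ precisely as in the main proof, I would compute $|\Theta|\cdot2^{-H_2(K|B\Theta)}$, since the normalization of $\rho_{KB\Theta}$ is now $1/|\Theta|$ rather than $1/(d_A+1)$. Applying the swap trick and then substituting the 2-design identity above gives
\begin{align}
|\Theta|\cdot2^{-H_2(K|B\Theta)}&=\Tr_{BB'}\Tr_{AA'}\left[\Big(\textstyle\sum_{\theta,k}\dya{\theta_k}\ot\dya{\theta_k}\Big)(\rhot_{AB}\ot\rhot_{A'B'})F_{BB'}\right]\notag\\
&=\frac{|\Theta|}{d_A+1}\Tr_{BB'}\Tr_{AA'}\left[(\id_{AA'}+F_{AA'})(\rhot_{AB}\ot\rhot_{A'B'})F_{BB'}\right]\ .
\end{align}
As in~\eqref{eq:mainproof}, the $\id_{AA'}$ term produces $\Tr_B[\Tr_A(\rhot_{AB})\Tr_A(\rhot_{AB})]=\Tr_B[\rho_B^{1/2}\rho_B^{1/2}]=1$ (using $\Tr_A(\rhot_{AB})=\rho_B^{1/2}$), while the $F_{AA'}$ term produces $\Tr[\rhot_{AB}^2]=2^{-H_2(A|B)}$.

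Finally, the common factor $|\Theta|$ cancels on both sides, leaving
\begin{align}
(d_A+1)\cdot2^{-H_2(K|B\Theta)}=1+2^{-H_2(A|B)}\ ,
\end{align}
which rearranges to the claimed equality. The only point requiring care compared to the MUB case is the bookkeeping of the normalization $|\Theta|$, which no longer equals $d_A+1$ but cancels identically; I do not expect a genuine obstacle here, as the entire analytic content is already supplied by the computation~\eqref{eq:mainproof} together with the 2-design property established in the preceding Lemma.
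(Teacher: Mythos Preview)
Your proposal is correct and is exactly the approach the paper intends: the paper itself gives no explicit proof for this corollary beyond ``Hence, our main result also holds for unitary two-designs,'' relying on the preceding Lemma together with the computation~\eqref{eq:mainproof}. You have simply (and accurately) filled in the normalization bookkeeping that the paper leaves implicit.
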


%%%%%%%%%%%%%%%%%%%%%%%%%%%%%%%%%%%%%%%%%%%%%%%%%%%%%%%%%%%%%%%%

\subsection{More general entropies}\label{app:more_general}

Our main result~\eqref{eq:mainResult} is in terms of the conditional R\'enyi 2-entropy. However, we can also prove it in terms of a more general continuous family of conditional 2-entropies. For $\nu\in[0,1]$ and bipartite quantum states $\rho_{AB}$, we define
\begin{align}\label{eq:nuentropies}
H_{2,\nu}(A|B)= -\log\Tr\left[\rho_{AB,\nu}\ad \rho_{AB,\nu}\right]\ ,
\end{align}
where
\begin{align}
\rho_{AB,\nu}=\big(\id_A\ot\rho_B^{-(1-\nu)/4}\big)\rho_{AB}\big(\id_A \ot \rho_B^{-(1+\nu)/4}\big)\ .
\end{align}
It is easily seen that $\nu=0$ corresponds to the usual definition~\eqref{eq:h2def} in the main text. We state the generalization of our uncertainty equality~\eqref{eq:mainResult} for a full set of MUBs, but note that it also holds for all other ``informationally equivalent'' measurements (cf.~Appendix~\ref{app:revisited}).

\begin{corollary}\label{thm:mainapp}
Let $\{\Theta\}_{\theta\in\Theta}$ be a complete set of MUBs on some Hilbert space $\HC_A$, and denote $\theta=\{\ket{\theta_{k}}\}_{k=1}^{d_{A}}$. Then, we have for any bipartite quantum state $\rho_{AB}$ that
\begin{align}\label{eqn25}
H_{2,\nu}(K|B\Theta)= \log(d_A + 1)-\log\left(2^{-H_{2,\nu}(A|B)} + 1\right)\ , 
\end{align}
where
\begin{align}
&\rho_{KB\Theta}=\notag\\
&\frac{1}{d_{A}+1}\sum_{\theta,k}(\dya{\theta_{k}}\ot\id_{B})\rho_{AB}(\dya{\theta_{k}}\ot\id_{B})\otimes\dya{\theta}_{\Theta}\ .
\end{align}
\end{corollary}

The proof is obvious by just taking the one for the conditional R\'enyi 2-entropy, and replacing $\tilde{\rho}_{AB}$ with $\rho_{AB,\nu}$. It is worth noting that~\eqref{eqn25} applies to the variant (used in e.g.~\cite{TomColRen09,Hayashi12}),
\begin{align}\label{eq:H2prime}
H_{2,1}(A|B)= -\log \Tr \left[ \rho_{AB}^2 (\id_A \ot \rho_B^{-1})\right]\ .
\end{align}

%%%%%%%%%%%%%%%%%%%%%%%%%%%%%%%%%%%%%%%%%%%%%%%%%%%%%%%%%%%%%%%%

%merlin.mbs aipauth4-1.bst 2010-07-25 4.21a (PWD, AO, DPC) hacked
%Control: key (0)
%Control: author (9) reversed initials
%Control: editor formatted (0) differently from author
%Control: production of article title (-1) disabled
%Control: page (0) single
%Control: year (1) truncated
%Control: production of eprint (0) enabled
%

%\bibliographystyle{aipauth4-1}
%\bibliography{PRA}

\end{document}